\newtheorem{theorem}{Theorem}
\newtheorem{lemma}{Lemma}
\theoremstyle{definition}
\newtheorem{example}{Example}[section]
\newcommand{\E}{\mathbb{E}}
\newcommand{\PP}{\mathbb{P}}
\newcommand{\xhdr}[1]{\paragraph{\bf #1.}}
\author{Andrés Cristi$^1$ \and Sigal Oren$^2$}
\date{%
$^1$ Center for Mathematical Modeling, Universidad de Chile, Chile\\
$^2$ Ben-Gurion University of the Negev, Israel
}
\title{Planning Against a Prophet: A Graph-Theoretic Framework for Making Sequential Decisions}
\begin{document}

\maketitle
\begin{abstract}
We devise a general graph-theoretic framework for studying prophet inequalities. In this framework, an agent traverses a directed acyclic graph from a starting node $s$ to a target node $t$. Each edge has a value that is sampled from a known distribution. When the agent reaches a node $v$ it observes the realized values of all the outgoing edges from $v$. The agent's objective is to maximize the expected total value of the path it takes. As in prophet inequalities, we compare the agent's performance against a prophet who observes all the realizations of the edges' values ahead of time. Our analysis reveals that this ratio highly depends on the number of paths $k$ required to cover all the nodes in the graph. In particular, we provide an algorithm that guarantees a prophet inequality ratio of $\frac{1}{2k}$ and show an upper bound of $\frac{1}{k+1}$. 

Our framework captures planning problems in which there is uncertainty regarding the costs/benefits of each action. In particular, it captures an over-time variant of the classic prophet inequality in which a seller leases a durable item, such as an apartment, for $n$ time units. Each period a lessee appears and may have a different value for each lease term. We obtain a tight bound of $1/2$ for this variant. To make this framework even more expressive, we further generalize it to accommodate correlations between edges originating from the same node and allow for additional constraints on the edges the agent can take. The generalized framework captures many well-studied prophet inequality problems, including $d$-dimensional  matching, $k$-prophet inequality, and more. 

\end{abstract}

\newpage

\section{Introduction}

The classic prophet inequality problem \cite{krengel1978semiamarts} considers a decision maker (DM) who is interviewing candidates in order to select the candidate with the highest value. The candidates arrive sequentially, and initially, the DM only knows the distributions from which their values are drawn. At each step $i$, the DM observes the value of candidate $i$ and needs to make an irrevocable decision whether to hire her or not. Once a candidate is hired the process ends.
This line of research focuses on the competitive ratio, which compares the value of the candidate that the DM hires to the maximum realized value of a candidate. The latter is the benchmark achieved by a prophet who observes all the realizations of the candidates' values ahead of time. An algorithm that can guarantee an expected value that is a $c$ fraction of the expected value of the prophet is a $c$-competitive algorithm. Classic results by Krengel and Sucheston~\cite{krengel1978semiamarts} and Samuel-Cahn~\cite{samuel1984comparison} show that various simple algorithms are $1/2$-competitive, and this is the best competitive ratio attainable.

Since its inception, the prophet inequality problem has inspired a magnificent body of work spanning various academic fields (See \cite{lucier2017economic} for a survey of some of the recent work in the computer science community). From a broader perspective, these problems often involve a stochastic setting where a decision-maker (DM), armed only with distributional information, makes sequential decisions to compete against an all-knowing prophet. Many planning problems can be modeled as selecting a path in a graph where nodes represent different states, and the previously mentioned broader perspective can also be applied to this type of planning problem. In this paper, we utilize this connection to develop a graph theoretic framework for prophet inequalities that not only unifies many of the well-studied prophet inequality models under a single framework but also suggests natural ways to extend them. We obtain an algorithm that works for any problem in our framework, and its competitive ratio is a function of the graph parameters. The algorithm achieves a competitive ratio that is similar or equal to the state of the art for many of the problems we capture and even offers a better competitive ratio to what was formerly known for one of them.


\xhdr{Graph-Theoretic Framework}
 In this framework, we are given a connected directed acyclic graph (DAG) $G=(V,E)$, with starting node $s$ and target node $t$. Each edge $e\in E$ has a non-negative random (and possibly correlated) value $w_e$ sampled from a distribution $\mathcal{F}_e$. We require that the values of edges originating from different nodes be drawn independently, but we allow correlations between the values of edges originating from the same node.\footnote{{Correlated distributions in prophet inequality settings are known to result in notoriously bad competitive ratios \cite{immorlica2020prophet}, so it is quite surprising that in our framework we are able to handle some correlations without any deterioration in the competitive ratio.} } A decision-maker (DM) has to select a path from $s$ to $t$ sequentially, starting from $s$. The DM knows the distribution of the weights but only observes the realized weight of an edge $e=(u,v)$ if and when the node $u$ is reached. The objective of the DM is to maximize the expectation of the sum of weights along the selected path. An algorithm or policy $ALG$ is described by a rule {that dictates} at each node which edge to follow next, given the observed realizations so far. Let $ALG$ denote the sum of weights along the path that the agent selected.

We are interested in comparing $\E(ALG)$ with the expectation of the {offline} optimal path {(e.g., the one taken by an all knowing prophet)}. Formally, denoting by $\mathcal{P}$ the set of $s,t$-paths, let
\begin{align*}
    OPT= \max_{P\in \mathcal{P}} \sum_{e\in P} w_e.
\end{align*}
{We seek to understand} how large $\E(ALG)$ can be compared with $\E(OPT)$. 

\xhdr{Results: Graphs with One Focal Path} 
A basic instance in our framework is a graph that is composed of an $s,t$-path and some bypass edges that connect some of the nodes on the path {(as in Figures \ref{fig:classic-prophet} and \ref{fig:over-time})}. We refer to such a graph as a graph with one focal path. Our first main result is an algorithm for the single focal path problem that is $1/2$-competitive (\Cref{sec:one-focal}). The algorithm is based on a powerful technique known as online contention resolution~\cite{FSZ21}. The main idea is that, at each step, with constant probability, we stay in the focal path, and with constant probability we try to mimic the decision of the prophet. The reason for staying on the focal path is that from this path, we can observe all realizations. By staying on the path, we protect ourselves from mistakes arising from the fact that we can mimic the prophet only locally. A key observation is that the sum of the probabilities that the prophet selects the edges of any $s,t$-cut is $1$, so by scaling down the probabilities, we are guaranteed to stay in the focal path with constant probability.

Considering one focal path already suffices to capture the classic prophet inequality. To this end, we set the weight of each edge on the path to $0$ and add a bypass edge from each node $i=1,\dots, n$ to $t$. We set the weight of the bypass edge from $i$ to $t$ to be a sample from a known distribution $\mathcal{F}_i$. This instance is illustrated in Figure \ref{fig:classic-prophet}. The observation that our framework extends the classic prophet inequality problem immediately implies that our $1/2$-competitive algorithm is the best possible for this setting, as $1/2$ is already best-possible for the classic prophet inequality~\cite{krengel1978semiamarts,samuel1984comparison}. {Put differently, graphs with one focal path cover a broader range of applications than the traditional prophet inequality without sacrificing the quality of the competitive ratio.}

One focal path also suffices to capture the setting of Abels et al. \cite{abels2023prophet}, who study an over-time variant of the prophet inequality. In this variant, there is a set time horizon of $n$ steps and on each time step a candidate arrives and his value $v_i$ is sampled from some distribution $\mathcal F$. The values for all candidates are sampled IID from the same distribution. The DM at step $i$ chooses the number of time steps $t_i\geq 0$ to hire the candidate. The value for hiring the candidate for $t_i$ steps is $t_i\cdot v_i$. If the candidate is hired for $t_i$ steps, then the DM can only hire another candidate after $t_i$ steps, and therefore, cannot hire candidates between times {$i+1$} and {$i+t_i-1$}. In other words, if {$t_i>1$}, then the DM is forced to set $t_j=0$ for all $j$ in {$[i+1,i+t_i-1]$}. For the case that $n$ is fixed, Abels et al. present a $\approx 0.396$-competitive algorithm. This problem can be modeled using our framework with one focal path. To do so, we create a node on the path for each time step and add bypass edges corresponding to all possible hiring terms. We illustrate this in Figure \ref{fig:over-time}. Our algorithm for this setting is $1/2$-competitive, and this holds not only for the case where the values of all candidates are drawn from the same distribution but also for situations in which each value \(v_i\) is sampled from a different distribution \(\mathcal{F}_i\). Furthermore, the value for each employment term can be any function of \(v_i\) or sampled from an entirely different distribution that may or may not be correlated with \(\mathcal{F}_i\). This implies, for example, that we can handle cases where not all hiring terms are permitted; for instance, we may only be able to hire for 1, 3, or 12 months.




\begin{figure}[h!]
\centering
\begin{tikzpicture}[->,>=stealth,shorten >=1pt,auto,node distance=2.5cm,
                    thick,main node/.style={circle,draw,font=\sffamily\Large\bfseries}]

  \node[main node] (1) {$1$};
  \node[main node] (2) [right of=1] {$2$};
  \node[main node] (3) [right of=2] {$3$};
  \node[main node] (4) [right of=3] {$4$};
  \node[main node] (5) [right of=4] {$5$};
  \node[main node] (t) [right of=5] {$t$};

  \path[every node/.style={font=\sffamily\small}]
    (1) edge node[above] {0} (2)
    (2) edge node[above] {0} (3)
    (3) edge node[above] {0} (4)
    (4) edge node[above] {0} (5)
    (5) edge node[above] {$\mathcal{F}_5$} (t)
    (1) edge[bend left=40] node[above, pos=0.2] {$\mathcal{F}_1$} (t)
    (2) edge[bend left=40] node[above, pos=0.2] {$\mathcal{F}_2$} (t)
    (3) edge[bend left=40] node[above, pos=0.2] {$\mathcal{F}_3$} (t)
    (4) edge[bend left=40] node[above, pos=0.2] {$\mathcal{F}_4$} (t);
\end{tikzpicture}
\caption{Graph for capturing classic prophet inequality. The starting node is $s=1$.}
\label{fig:classic-prophet}
\end{figure}
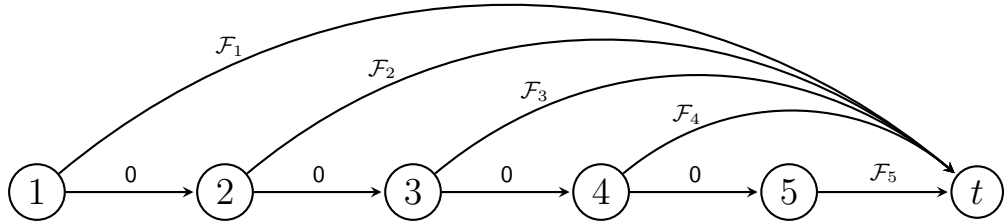

\begin{figure}[h!]
\centering
\begin{tikzpicture}[->,>=stealth,shorten >=1pt,auto,node distance=2.5cm,
                    thick,main node/.style={circle,draw,font=\sffamily\Large\bfseries}]

  \node[main node] (1) {$1$};
  \node[main node] (2) [right of=1] {$2$};
  \node[main node] (3) [right of=2] {$3$};
  \node[main node] (4) [right of=3] {$4$};
  \node[main node] (5) [right of=4] {$5$};
  \node[main node] (t) [right of=5] {$t$};

  \foreach \from/\to in {2/4,2/5,2/t,3/5,3/t,4/t}
    \path[draw=gray!50] (\from) edge[bend left=40] (\to);

  \foreach \from/\to in {1/2,2/3,3/4,4/5,5/t}
    \path (\from) edge (\to);

  \foreach \from/\to in {1/3,1/4,1/5,1/t}
    \path (\from) edge[bend left=40] (\to);

\end{tikzpicture}
\caption{An illustration for the over-time variant of prophet inequality with a time horizon of $5$ steps. Some of the edges are gray to make the illustration easier to digest. }
\label{fig:over-time}
\end{figure}
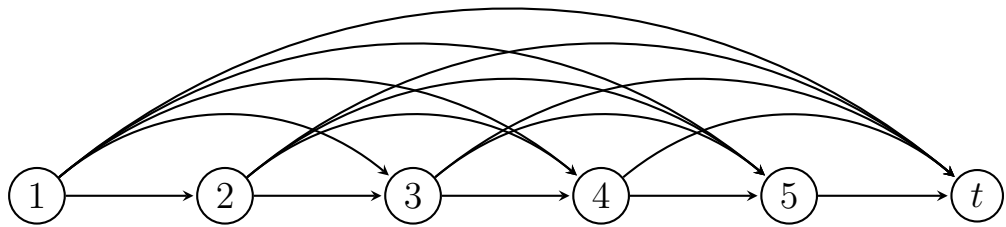

\xhdr{Results: General Graphs}
We identify a key property of the graph that determines the achievable competitive ratio. This is the number of paths that are required to cover all the nodes of the graph. By Dilworth's theorem~\cite{D50}, this number is the same as the size of the maximal antichain in the graph. We observe an {upper} bound of $\frac{1}{k+1}$ where $k$ is the number of paths required to cover the graph and design a $\frac{1}{2k}$-competitive algorithm (\Cref{sec:general}). Roughly speaking, this is done by carefully separating the graph into $k$ focal paths in a way that guarantees that at least one of the paths collects a $\frac{1}{k}$ fraction of the optimal solution. Next, we use our algorithm for a single focal path as a black box to achieve $1/2$ of the expected value of the optimal solution on each separate path. Putting this together, we get a $\frac{1}{2k}$-competitive algorithm. {For the special case of graphs that are composed of $k$ disjoint paths we obtain  an improved competitive ratio of $\frac{1}{k+1}$.}
 We note that it is possible to have different path covers of the same size. The exact choice of the cover can significantly impact the performance of the algorithm. We highlight an instance in which there exists one path cover for which our algorithm obtains $\frac{1}{2k}$ of the expected value of the optimal solution and another path cover on which it obtains a $\frac{1}{k+1}$ fraction.

Work on prophet inequalities usually focuses on problems that are downward-closed. That is, any subset of a feasible solution is also a feasible solution. Our graph-theoretic framework is {unique} in the landscape of prophet inequality problems in that even though the set of feasible solutions is not downward-closed we are able to design algorithms with good competitive ratio. 

The framework covers useful extensions of well studied problems to a setting in which the DM has $k$ markets at his disposal and can switch from one market to another. As a concrete example, consider again the setting of over-time prophet inequality by Abels et al. \cite{abels2023prophet}. Now, assume that the DM has $k$ different markets. For each market there is a different known sequence of candidates. When the DM chooses the length of a term to hire a candidate he can also choose which market to turn to at the end of this period. An illustration of this can be found in Figure \ref{fig:markets}.

\begin{figure}[h!] 
  \centering
  \begin{tikzpicture}[>=stealth,shorten >=1pt,initial/.style={}, every node/.style={circle,draw}]
    \node (s) at (0,0) {$s$};
    \node (v1) at (2,-1) {$v_1$};
    \node (v2) at (4,-1) {$v_2$};
    \node (v3) at (6,-1) {$v_3$};
    \node (v4) at (8,-1) {$v_4$};
    \node (u1) at (2,1) {$u_1$};
    \node (u2) at (4,1) {$u_2$};
    \node (u3) at (6,1) {$u_3$};
    \node (u4) at (8,1) {$u_4$};
    \node (t)  at (10,0) {$t$};

    \draw[->] (s) -- (v1);
    \draw[->] (v1) -- (v2);
    \draw[->] (v1) to[bend right] (v3); 
     \draw[->] (v2) to[bend right] (v4); 
      \draw[->] (v3) to[bend right=60] (t); 
    \draw[->] (v1) -- (u2);
    \draw[->] (v2) -- (v3);
    \draw[->] (v2) -- (u3);
    \draw[->] (v3) -- (v4);
    \draw[->] (v4) -- (t);

    \draw[->] (s) -- (u1);
    \draw[->] (u1) -- (u2);
    \draw[->] (u2) -- (u3);
    \draw[->] (u3) -- (u4);
    \draw[->] (u4) -- (t);
     \draw[->] (u1) to[bend left] (u3); 
     \draw[->] (u2) to[bend left] (u4); 
      \draw[->] (u3) to[bend left=60] (t); 

    \draw[->] (u1) -- (v2);
    \draw[->] (u2) -- (v3);
    \draw[->] (u3) -- (v4);
    \draw[->] (v1) -- (u3);
    \draw[->] (v2) -- (u4);
    \draw[->] (u1) -- (v3);
    \draw[->] (u2) -- (v4);
    \draw[->] (u3) -- (t);
    \draw[->] (v3) -- (t);
    \draw[->] (v3) -- (u4);
  \end{tikzpicture}
  \caption{An illustration for over-time prophet inequality with two markets. To simplify the illustration the candidates can be hired for either one time step or two. From $s$, the DM can decide which market to go to first so both edges have weight $0$. For any $i$ the weight of the edges $(v_i,v_{i+1}), (v_i,u_{i+1})$ is the same and the weight of the edges $(v_i,v_{i+2}), (v_i,u_{i+2})$ is the same.}
 \label{fig:markets}
\end{figure}
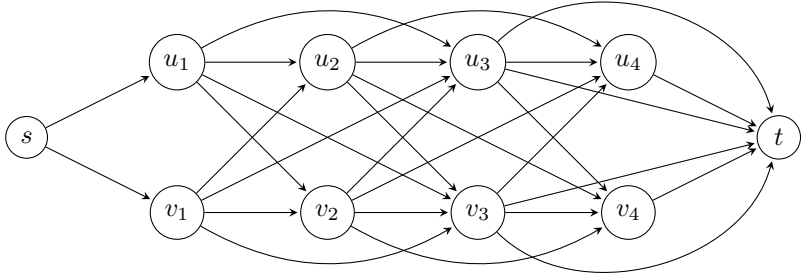

\xhdr{Extension: Edge Labels and Capacities} 
{Our basic graph theoretic framework does not cover }problems such as the multiple-choice prophet or online sequential matching without a considerable blow-up in the graph size. To remedy this situation, and to allow for a much more expressive framework, we augment the graph by adding edge labels and capacities on the number of edges that the DM can use for each label. We impose the same constraints on the prophet. In this general framework, each edge can have multiple labels, and we use $d$ to denote the maximum number of labels that an edge may have. Finally, we assume that any two nodes that are connected by a labeled edge are also connected by an unlabeled edge.


For the case of one focal path and at most $d$ labels on each edge, we obtain an algorithm that is $\frac{1}{d+2}$-competitive (\Cref{sec:one-focal-labels}). Observe that with a single label (e.g., $d=1$) we can already capture the multiple-choice prophet inequality~\cite{HKS07,A14}, {in which the DM selects at most $m$ candidates.} This can be done by using a single path and having two parallel edges from each node to the subsequent one. One edge without any label has a fixed value of $0$ and the other is a labeled edge with the value sampled from some known distribution. We also set the capacity for this label to be $m$. For one label ($d=1$), we have a $\frac{1}{3}$-competitive algorithm. This is arguably far from the optimal factor of $1-O(1/\sqrt{m})$ for the $m$-choice prophet inequality~\cite{A14}, but recall that already for $d=0$ the model captures the classic prophet inequality so the competitive factor cannot be better than $1/2$.
In fact, for the case of $d=1$ we provide an upper bound of $\frac{4}{9}$ on the best achievable competitive ratio. 

Adding labels also allows us to handle the problem of online stochastic bipartite max-weight matching \cite{gravin2019prophet}. In this problem, we have a set of items and bidders. Each bidder is interested in a single item and its value for each item is sampled from a known distribution. In the edge arrival variant of the problem, at each time step a value of a bidder for an item is revealed and the DM needs to decide whether to allocate the item to the bidder or not. To capture this setting in our framework, we define the set of labels to include all bidders and items. Each label has a capacity of $1$ and each edge is labeled by the bidder and item it connects, so $d=2$. For this problem our algorithm is $1/4$-competitive, which is not too far apart from the best known algorithm which is $0.344$-competitive \cite{MMG23,EFGT23}. We can also model the one sided vertex arrival variant of the problem. In which in each time step a bidder arrives and its values for all the items are realized. {For this variant, a single label on each edges suffices: we label each edge with its end point and set a capacity of $1$ for each label. Hence, we obtain a $\frac{1}{3}$-competitive algorithm}. 
This is in comparison to the optimal competitive ratio which is $1/2$ \cite{feldman2014combinatorial}. Lastly, by allowing for each edge to have $d$ labels we can capture the yet more general problem of $d$-dimensional matching. For this problem the best known algorithm is $1/(d+1)$-competitive~\cite{CCFPW23}. {We would like to highlight that our bounds are slightly worse than the best known bounds for these specific problems for a good reason: our framework is strictly more general. For example, it allows to seamlessly add an outside option by connecting $s$ directly with $t$.} 


\subsection{Related Work}

Beyond the classic prophet inequality setting of Krengel and Sucheston~\cite{krengel1978semiamarts}, a large body of literature considers combinatorial extensions of the model, motivated by its connections with mechanism design. A basic extension is that of the multi-choice prophet inequality, where the DM selects at most $m$ random variables instead of just one~\cite{A14,CDL23,JMZ22}, where the best possible competitive ratio is $1-\frac{1}{\sqrt{m+3}}$. Another important extension is the case of selecting an independent set in an underlying matroid. Interestingly, as in the classic setting, there is a $1/2$-competitive algorithm for this case~\cite{KW12}. If the DM selects a matching in a graph, there is a $0.344$-competitive algorithm~\cite{MMG23} if edges arrive one by one, and a $1/2$-competitive algorithm if nodes with all their incident edges arrive one by one~\cite{feldman2014combinatorial}. 

Relevant to the connection with mechanism design is the case of online combinatorial auctions, where a sequence of buyers arrive one by one and the DM has a set of $m$ heterogeneous items to allocate. Each buyer has a random valuation function independent of the others. A DM has to design a mechanism to allocate the items to maximize social welfare, which is equivalent to designing an online algorithm that decides which set of items each buyer gets. For the case that valuations belong to a class known as XOS, there is a $1/2$-competitive algorithm~\cite{feldman2014combinatorial}. A recent result shows that if valuations are subadditive (which contains the class XOS), then there is a $1/6$-competitive algorithm~\cite{CC23}. If valuations are not subadditive, no better than an $O(\log m/ \sqrt{m})$ factor is possible~\cite{MSV23}. However, if no buyer is allowed to take more than $d$ items, there is a $\frac{1}{d+1}$-competitive algorithm.

The line of work on time-inconsistent planning originating from \cite{kleinberg2014time} operates is a similar graph theoretic framework.
These works consider a similar model on a directed acyclic graph, except that the rewards on the edges are deterministic. The sequential aspect comes from the assumption that the DM exhibits a type of bias known as \emph{present bias}. This means the DM perceives the reward collected at the current time step as having a higher value (e.g., multiplied by some parameter $b>1$) than its nominal value. As a result, at each node, the DM recomputes the path she plans to take. This correspondence raises a fascinating direction of understanding prophet inequalities for present-biased decision-makers.

\section{Model}
We are given a directed acyclic graph (DAG) $G=(V,E)$, with starting node $s$ and target node $t$. Each edge $e\in E$ has a non-negative random (and possibly correlated) value $w_e$. A decision-maker (DM) has to select a path from $s$ to $t$ sequentially, starting from $s$. The DM knows the distribution of the values but only observes the realized weight of an edge $e=(u,v)$ if and when the node $u$ is reached. The objective of the DM is to maximize the expectation of the sum of values along the selected path. An algorithm or policy $ALG$ determines at each node which edge to follow next, given the observed realizations so far. We slightly abuse the notation and use $ALG$ to refer to both the set of selected edges and the sum of their values.

We compare $\E(ALG)$ with the expectation of the optimal path. {Formally, if we denote  $\mathcal{P}$ the set of $s,t$-paths, the optimal path is}
\begin{align*}
    OPT= \max_{P\in \mathcal{P}} \sum_{e\in P} w_e.
\end{align*}
We are interested in understanding how large is $\E(ALG)$ compared with $\E(OPT)$. We refer to $OPT$ as the value of the optimal offline algorithm, or the value of a prophet, as it is the value the DM would get if all values were revealed beforehand.
We say $ALG$ is $c$-competitive if $\E(ALG)\geq c\cdot \E(OPT)$.

\paragraph{Correlation.} We assume that edges that start in different nodes are independent. Edges that start in the same node can be arbitrarily correlated.

\paragraph{Path Covering.} It will be important for our results to define a \emph{path covering} of the graph. For a path $P\in \mathcal{P}$, we denote by $V(P)$ the set of nodes visited by $P$. We say a set of paths $P_1,\dots, P_k\in \mathcal{P}$ covers the graph if $V=\cup_{i=1}^k V(P_i)$. We say the graph has \emph{width} $k$ if the smallest number of $s,t$-paths needed to cover it is $k$. A related notion is that of an \emph{antichain}. A set $A\subseteq V$ is an antichain if, for every pair of nodes $u,v\in A$; there is no path connecting $u$ to $v$ or $v$ to $u$. The following observation connects these two notions.
\begin{theorem}[Dilworth's Theorem~\cite{D50}]
    A {directed acyclic} graph has width $k$ if and only if its largest antichain has size $k$.
\end{theorem}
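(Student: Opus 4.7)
The plan is to deduce this as a direct corollary of the classical Dilworth theorem applied to the reachability partial order of $G$. Define $(V,\preceq)$ by setting $u\preceq v$ iff there is a directed path in $G$ from $u$ to $v$ (with $u\preceq u$). Since $G$ is acyclic, $\preceq$ is a partial order, and its antichains in the poset sense coincide exactly with the antichains defined in the model.

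For the easy direction (width is at least the maximum antichain size), let $A\subseteq V$ be an antichain. Any $s,t$-path traces a sequence of nodes $s=v_0,v_1,\dots,v_m=t$ with $v_i\preceq v_{i+1}$, so it forms a chain in $(V,\preceq)$. Thus a single path intersects $A$ in at most one node, and any collection of $s,t$-paths that covers $V$ must contain at least $|A|$ paths. Therefore width $\geq |A|$ for every antichain.

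For the other direction, apply the classical Dilworth theorem to $(V,\preceq)$: if the largest antichain has size $k$, then $V$ partitions into $k$ chains $C_1,\dots,C_k$. Write $C_i$ in increasing order as $v_1^i\preceq v_2^i\preceq\cdots\preceq v_{m_i}^i$. By definition of $\preceq$, for each consecutive pair there is a directed $G$-path joining them. Concatenating these pieces, prepending a path from $s$ to $v_1^i$, and appending a path from $v_{m_i}^i$ to $t$ produces an $s,t$-path $P_i$ with $C_i\subseteq V(P_i)$. The paths $P_1,\dots,P_k$ then cover $V$, so width $\leq k$.

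The only subtle point is the existence of the prefix $s\to v_1^i$ and suffix $v_{m_i}^i\to t$ used to complete each chain into an $s,t$-path; this is where the paper's standing assumption that the DAG is connected with $s$ as source and $t$ as sink (so every node lies on some $s,t$-path) is needed. Granted that, the proof reduces entirely to the classical chain/antichain duality, with no further combinatorial work required.
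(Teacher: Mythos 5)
Your proof is correct. The paper does not prove this statement at all --- it cites Dilworth~\cite{D50} and uses it as a black box --- so there is no in-paper argument to compare against; what you have written is the standard reduction from the $s,t$-path-cover formulation to the classical poset formulation. A few remarks on the details you handled well, and one you glossed over slightly. You correctly identified the need for the standing assumption that every node lies on some $s,t$-path; in fact this assumption is already forced by the paper's definition of width, since otherwise no finite collection of $s,t$-paths could cover $V$ and the width would be undefined. The point you did not spell out is that when you concatenate the subpaths between consecutive chain elements (and the prefix from $s$ and suffix to $t$), the resulting walk is in fact a simple path: if two of the pieces shared an internal node $w$, acyclicity of $G$ together with antisymmetry of $\preceq$ would force $w$ to equal the shared chain endpoint, a contradiction, so no node is repeated. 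With that observation supplied, the argument is complete. Your decomposition into the ``easy'' inequality (any $s,t$-path is a chain, hence meets an antichain at most once) and the ``hard'' inequality (Dilworth's chain partition lifts to a path cover) is exactly the chain/antichain duality the paper implicitly relies on, so nothing further is needed.
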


\paragraph{Edge Labels.} We extend our model by introducing labels for the edges. We are given a finite set of labels $L$, and for each edge $e\in E$, there is a set of labels $L_e\subset L$. For a label $\ell\in L$ we denote as $E(\ell) = \{e\in E: \ell \in L_e\}$ the set of edges with label $\ell$. For each label, we are given a capacity $c_\ell \geq 1$, and we add the constraint that the DM cannot take more than $c_\ell$ edges from $E(\ell)$, i.e., for an $s,t$-path $P$ to be feasible, we require that $|P\cap E(\ell)|\leq c_{\ell}$ {for every $\ell\in L$ }. We also impose that $OPT$ satisfies the same constraint. We assume that for every labeled edge, there is a parallel unlabeled edge.\footnote{This is a technical assumption that prevents an online algorithm that makes local decisions from getting trapped because of the labels. We can allow for a slightly more general assumption. For every pair of nodes $u,v$, if $v$ can be reached from $u$, then there is a way to reach $v$ from $u$ that uses only unlabeled edges.}
{Unless we} explicitly mention that we are considering the model with labels, we consider the model without labels.

\section{Width 1 Graphs: Single Focal Path with Bypasses}
In this section, we consider graphs that are composed of a single path with optional bypasses as in Figure \ref{fig:classic-prophet}. We first present a $1/2$-competitive algorithm for graphs with unlabeled edges and then adapt the algorithm to achieve a competitive ratio of $\frac{1}{d+2}$ for the more general setting of edges with at most $d$ labels. While the latter algorithm also covers unlabeled graphs, we find it instructive to present our algorithm for graphs with unlabeled edges first, as it is easier to comprehend. 

\subsection{Unlabeled Edges} \label{sec:one-focal}

In this section, we prove the following result.

\begin{theorem}
\label{thm:width1_nolabel}
    If the graph has width $1$, i.e., there is an $s,t$-path such that $V(P)=V$, then there is an algorithm $ALG$ that satisfies $\E(ALG)\geq \frac{1}{2} \E(OPT)$.
\end{theorem}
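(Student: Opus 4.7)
Let the focal path be $s = v_1, v_2, \ldots, v_n = t$. For each edge $e$, set $x_e = \PP(e \in OPT)$, and for each node $v_i$ write $X_i$ for the vector of realized weights of the edges leaving $v_i$ and $q_e(X_i) = \PP(e \in OPT \mid X_i)$. Two facts will drive the analysis: (i) for any $s,t$-cut, the values $x_e$ of its edges sum to $1$, since every $s,t$-path crosses every cut exactly once; (ii) for an edge $e$ out of $v_i$, the weight $w_e$ is measurable with respect to $X_i$, so by the tower rule $\E(w_e \cdot q_e(X_i)) = \E(w_e \cdot \mathbf{1}\{e \in OPT\})$.

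\textbf{Algorithm.} When the algorithm reaches a node $v_i$, it observes $X_i$ and, for each bypass edge $e = (v_i, v_k)$ with $k > i+1$, takes $e$ with probability $q_e(X_i)/2$; otherwise it takes the focal edge $(v_i, v_{i+1})$. This is an OCRS-style rule that scales the prophet's local decision by $1/2$; the slack is what lets us stay on the focal path often enough to keep observing future realizations.

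\textbf{Analysis and main obstacle.} Let $r_i = \PP(v_i \in ALG)$. The probability that the algorithm skips $v_i$ equals $\sum_{j < i < k} r_j \cdot x_{(v_j, v_k)}/2$, which by the cut identity applied to the partition $\{v_1, \ldots, v_{i-1}\} \mid \{v_i, \ldots, v_n\}$ is at most $(1/2) \sum_{j < i \leq k} x_{(v_j, v_k)} = 1/2$; hence $r_i \geq 1/2$ for every $i$. For the value, the per-edge identity $\E(w_e \cdot \mathbf{1}\{e \in ALG\}) = (r_i/2) \cdot \E(w_e \cdot \mathbf{1}\{e \in OPT\})$ for bypass edges, together with an analogous lower bound on the focal-edge contribution obtained by using $\sum_{\text{bypass } e' \text{ out of } v_i} \mathbf{1}\{e' \in OPT\} = \mathbf{1}\{v_i \in OPT\} - \mathbf{1}\{\text{focal}_i \in OPT\}$, gives $\E(ALG) \geq \frac{1}{4}\E(OPT)$ out of the box. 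The hard part is squeezing the remaining factor of $2$ to reach $\frac{1}{2}\E(OPT)$: this will require either telescoping focal contributions against bypass ones (charging each bypass that $OPT$ uses against the focal segments it replaces, and using $r_j \geq 1/2$ at the intermediate nodes) or scaling the bypass probabilities adaptively based on the observed history so that reachability and local mimicking of $OPT$ are not simultaneously halved. I expect the clean version of the argument to take the form of a potential comparison, maintaining inductively an inequality of the shape $\E(\text{value collected by ALG through } v_i) + \frac{1}{2}\, \E(\text{OPT's remaining value} \mid \text{history so far}) \geq \frac{1}{2}\E(OPT)$, with facts (i) and (ii) driving each inductive step.
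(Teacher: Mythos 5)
Your setup is sound (the cut identity and the tower-rule computation are both correct), but the proof is incomplete: as you yourself observe, the fixed scaling by $1/2$ only yields $\E(ALG)\geq \frac{1}{4}\E(OPT)$, because a bypass edge $e$ out of $v_i$ is taken with probability $r_i\cdot x_e/2 \geq x_e/4$, and you stop there with a list of candidate fixes rather than executing any of them. The missing step is precisely the ``adaptive scaling'' option you name at the end. The paper's algorithm accepts the tentatively chosen bypass edge out of $v_i$ with probability
\[
\alpha(i)=\frac{1/2}{1-\sum_{e'\in E_{-i}} x_{e'}/2},
\]
where $E_{-i}$ is the set of edges that skip $v_i$. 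One then proves by induction (over $i$) that each bypass edge $e$ is taken with probability \emph{exactly} $x_e/2$, not $r_i\,x_e/2$: the denominator of $\alpha(i)$ is precisely $\PP(\text{ALG reaches }v_i)$ by the induction hypothesis, so the product ``reach $v_i$, tentatively pick $e$, accept'' equals $\bigl(1-\sum_{e'}x_{e'}/2\bigr)\cdot x_e\cdot\alpha(i)=x_e/2$. The cut identity you prove as fact (i) is what guarantees $\alpha(i)\in[0,1]$, since $\sum_{e'\in E_{-i}}x_{e'}\leq 1$. Focal edges are taken with probability at least $x_e/2$, and your fact (ii) then closes the accounting to give $\E(ALG)\geq\frac{1}{2}\E(OPT)$. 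In short: your ``main obstacle'' paragraph correctly diagnoses the loss and even names the right cure, but the theorem is not proved until you replace the constant factor $1/2$ by $\alpha(i)$ and redo the induction; as written, the argument establishes only the weaker bound $1/4$.
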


Before proving the theorem, we show that the case of graphs of width 1 already captures the classic prophet inequality model as a special case. In the classic model, we have values $X_1,\dots, X_n$, and we can select at most one. This corresponds to a graph with $n+1$ nodes, $V=\{1,2,\dots,n+1\}$, where $s=1$ and $t=n+1$. Each node $i\leq n$ is connected to node $i+1$ via an edge with value $w_{i,i+1}=0$, and to $t$ via an edge with value $w_{i,t}=X_i$. It is easy to see that this instance is equivalent to the classic prophet inequality {(Figure \ref{fig:classic-prophet})}. It is well known that no algorithm achieves a better competitive factor than $1/2$, so this reduction shows that \Cref{thm:width1_nolabel} is tight.

\paragraph{Algorithm.} For $e$ in $E$, let $x_e$ be the probability that $OPT$ takes edge $e$. Enumerate the nodes in the order $P$ visits them as $1,2,\dots,n$. At a node $i$, the algorithm observes the realizations of the edges that start at $i$ and tentatively chooses one of them with the same probability as $OPT$, conditional on the values of the edges that start at $i$ (but independently of previously observed values). We refer to this edge as the \emph{tentative edge}. Let $E_{-i}$ be the set of edges that skip $i$, i.e., edges $(j,k)$ such that $j<i<k$. The algorithm takes the tentative edge independently, with probability 
\[\alpha(i)= \frac{1/2}{1-\sum_{e\in E_{-i}} \frac{x_e}{2}},\] and with probability $1-\alpha(i)$, it stays in the path, i.e., it takes the edge $(i,i+1)$. If there is no tentative edge,\footnote{When we draw a realization of $OPT$, conditional on the values of edges that start at $i$, it can happen that the realized path does not visit node $i$, so there is no tentative edge in this step.} or if the tentative edge is $(i,i+1)$, then the algorithm just takes $(i,i+1)$. Notice that $OPT$ cannot use more than one edge in $E_{-i}$ in any realization, so $\alpha(i)\in [0,1]$, and therefore, $ALG$ is well-defined.

\begin{lemma}
    For every $e\in E\setminus P$, the probability that $e$ is marked as tentative and is selected by $ALG$ is exactly $x_e/2$.
    For every $e\in P$ this probability is at least $x_e/2$.
\end{lemma}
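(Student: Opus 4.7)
The plan is to prove the lemma by induction on the node index $i$ along the focal path, simultaneously establishing two statements: (A$_i$) $\Pr[\text{ALG reaches } i] = 1 - \tfrac{1}{2}\sum_{e\in E_{-i}} x_e$, and (B$_i$) the conclusion of the lemma for every edge $e$ starting at node $i$. The two feed into each other across the induction: (A$_i$) is proved using (B$_j$) for all $j<i$, and (B$_i$) is then derived from (A$_i$). The base case $i=1$ is immediate, since $E_{-1}=\emptyset$, so $\Pr[\text{reach }1]=1$, and direct substitution with $\alpha(1)=1/2$ handles (B$_1$).

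For the inductive step, the key observation is that three sources of randomness decompose cleanly. The event $\{\text{reach } i\}$ depends only on realizations and coin flips at nodes $j<i$; the tentative edge $T_i$ is a function of the realizations of edges leaving $i$ together with fresh randomness used to sample from OPT's conditional distribution at $i$; and the $\alpha(i)$-coin is additional independent randomness. Because edges originating at distinct nodes have independent values, these three blocks of randomness are mutually independent. Moreover, averaging the defining conditional probability over the values at $i$ yields $\Pr[T_i=e]=x_e$ for every edge $e$ leaving $i$.

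To establish (A$_i$), I note that ALG fails to reach $i$ exactly when it takes some edge $e'\in E_{-i}$, and that any $s,t$-path uses at most one edge of $E_{-i}$, since such an edge would mark the unique place at which the path jumps across node $i$. Hence the events $\{\text{ALG takes } e'\}_{e'\in E_{-i}}$ are pairwise disjoint, and applying (B$_j$) for each $j<i$ gives $\Pr[\text{reach }i]=1-\tfrac{1}{2}\sum_{e'\in E_{-i}} x_{e'}$. For (B$_i$), the three-way independence yields, for any bypass edge $e$ out of $i$,
\[\Pr[e \text{ is tentative and selected}]=\Pr[\text{reach }i]\cdot x_e\cdot \alpha(i)=\left(1-\tfrac{1}{2}\sum_{e'\in E_{-i}} x_{e'}\right)\cdot x_e\cdot \frac{1/2}{1-\tfrac{1}{2}\sum_{e'\in E_{-i}} x_{e'}}=\frac{x_e}{2}.\]
For the focal edge $e=(i,i+1)$, whenever $T_i=e$ the algorithm deterministically selects $e$, so the probability in question equals $\Pr[\text{reach }i]\cdot x_e\geq x_e/2$, using that $\sum_{e'\in E_{-i}} x_{e'}\leq 1$ because OPT itself is a single $s,t$-path.

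The main obstacle is setting up the three-way independence carefully: one has to invoke the assumption that edges leaving different nodes have independent values, and treat the coin used to draw $T_i$ from OPT's conditional distribution together with the $\alpha(i)$-coin as fresh and private to step $i$. Once that is in place, the rest is a one-line algebraic cancellation powered by the definition $\alpha(i)=\frac{1/2}{1-\sum_{e'\in E_{-i}} x_{e'}/2}$, chosen precisely so that $\Pr[\text{reach }i]\cdot\alpha(i)=1/2$.
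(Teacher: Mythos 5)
Your proof is correct and follows essentially the same route as the paper's: an induction along the focal path in which the probability of reaching node $i$ is computed as $1-\tfrac12\sum_{e\in E_{-i}}x_e$ via the disjointness of the skip events, combined with the independence of that event from the values at node $i$ and the cancellation built into the definition of $\alpha(i)$. Your only (harmless) deviations are cosmetic: you split the induction into the two explicit statements (A$_i$) and (B$_i$) and justify the lower bound for focal edges via $\sum_{e'\in E_{-i}}x_{e'}\le 1$ rather than via $\alpha(i)\le 1$, both of which amount to the same inequality.
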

\begin{proof}
We show inductively that $ALG$ takes each edge outside the path with probability exactly $x_e/2$ (notice that outside $P$, $ALG$ only takes tentative edges). By definition, $E_{-1}=\emptyset$, so $\alpha(1)=1/2$. Therefore, the statement is true for all edges that start from node $1$. Consider a node $i$ and assume that the statement is true for all edges that start from nodes $1,\dots,i-1$. By the inductive hypothesis, the probability that $ALG$ visits $i$ is exactly
$$1-\sum_{e\in E_{-i}} \frac{x_e}{2},$$
since $ALG$ either uses exactly one edge from $E_{-i}$ and does not visit node $i$, or it does visit node $i$ and does not use any edge in $E_{-i}$.

Notice that the event that $ALG$ visits $i$ is independent of the values of edges that start at $i$.  Therefore, for an edge $e$ that starts at $i$ and does not belong to $P$, the probability that $ALG$ takes $e$ is exactly the probability of visiting $i$, times $x_e$, times $\alpha(i)$, i.e.,
$$\left( 1-\sum_{e'\in E_{-i}} \frac{x_{e'}}{2}\right)\cdot x_e \cdot \alpha(i) = x_e/2,$$
by the definition of $\alpha(i)$. If $e\in P$, then the probability that $e$ is tentative and $ALG$ takes it is
$$\left( 1-\sum_{e'\in E_{-i}} \frac{x_{e'}}{2}\right)\cdot x_e \geq \left( 1-\sum_{e'\in E_{-i}} \frac{x_{e'}}{2}\right)\cdot x_e \cdot \alpha(i) = x_e/2.$$
\end{proof}

\begin{proof}[Proof of \Cref{thm:width1_nolabel}.]
    We have that
\begin{align*}
\E(ALG)&=\sum_{e\in E} \E\left( w_e\cdot \mathds{1}_{\{ ALG \text{ takes } e \} } \right )\\
&\geq 
\sum_{e\in E} \E\left( w_e\cdot \mathds{1}_{\{ ALG \text{ marks as tentative and takes } e \} } \right)\\
&=\sum_{e\in E} \E(w_e\,|\,ALG \text{ marks as tentative and takes } e)\\
&\qquad \qquad \qquad \qquad \cdot \PP(ALG \text{ marks as tentative and takes } e)\\
& = \sum_{e\in E} \E(w_e\,|\,ALG \text{ marks } e  \text{ as tentative})\cdot \frac{x_e}{2}\\
&= \frac{1}{2}\sum_{e\in E} \E(w_e \,|\, OPT \text{ takes } e) \cdot x_e\\
&= \frac{1}{2}\E(OPT).
\end{align*}
We do not have equality because $ALG$ takes edges in $P$ with a larger probability than $x_e/2$.

\end{proof}

\subsection{Width 1 Graphs with Edge Labels}  \label{sec:one-focal-labels}

In this section, we consider the extension of our model where each edge also has a set of labels. 

\begin{theorem}
    \label{thm:colors}
    If each edge has at most $d$ labels, then there is an algorithm $ALG$ such that $\E(ALG)\geq \frac{1}{d+2}\E(OPT)$.
\end{theorem}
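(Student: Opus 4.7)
The plan is to extend the tentative-edge algorithm from the unlabeled case, scaling the acceptance probabilities more aggressively to account for the $d$ label constraints. Setting $z = \tfrac{1}{d+2}$, at node $i$ the algorithm tentatively chooses an edge $e$ from $i$ according to $OPT$'s conditional distribution given the realizations at $i$, exactly as before. If the tentative edge is the focal-path edge $(i,i+1)$ or there is no tentative edge, take $(i,i+1)$; otherwise, if the tentative $e$ is a bypass off $P$, take $e$ with probability $\alpha(i,e)$ \emph{provided} the algorithm has actually reached $i$ and no label in $L_e$ is already saturated, and fall back to $(i,i+1)$ in all other cases. Here $\alpha(i,e):= z / \PP(E^e_i)$, where $E^e_i$ denotes the event that the algorithm reaches $i$ with all labels in $L_e$ still unsaturated; this probability is an a priori quantity that can be precomputed.

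I would then establish the analog of the key lemma from \Cref{sec:one-focal} by induction on $i$: for every bypass edge $e\notin P$, $\PP(ALG \text{ takes } e) = z\cdot x_e$. The inductive step reduces to two ingredients. First, $\PP(E^e_i)\geq z$, which ensures $\alpha(i,e)\leq 1$. Second, the same conditional-independence argument as in \Cref{thm:width1_nolabel}, giving $\PP(ALG \text{ takes } e) = \PP(E^e_i)\cdot x_e \cdot \alpha(i,e) = z\cdot x_e$. The independence hinges on the fact that $E^e_i$ is determined by realizations strictly before node $i$ and by the algorithm's internal randomness before node $i$, while the tentative draw at $i$ is a randomized function of the realizations at $i$ alone.

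The heart of the argument is thus the bound $\PP(E^e_i)\geq z$, which I would obtain by a union bound over the $d+1$ ways $E^e_i$ can fail: either $ALG$ has already taken an edge of $E_{-i}$, or some label $\ell\in L_e$ (of which there are at most $d$) is already saturated upon arrival at $i$. The first event has probability $\sum_{e'\in E_{-i}}\PP(ALG \text{ takes } e') = z\sum_{e'\in E_{-i}} x_{e'} \leq z$, by the inductive hypothesis and the fact that $OPT$ uses at most one edge of $E_{-i}$. For each label $\ell$, Markov's inequality combined with the inductive hypothesis yields $\PP(\ell \text{ saturated}) \leq \E(|E(\ell)\cap ALG|)/c_\ell \leq z\sum_{e'\in E(\ell)} x_{e'}/c_\ell \leq z$, where the last step uses the feasibility of $OPT$ under the label capacity, $\sum_{e'\in E(\ell)} x_{e'}\leq c_\ell$. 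Summing these $d+1$ contributions gives $\PP(E^e_i)\geq 1-(d+1)z = z$.

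Finally, I would upgrade the probability identity to the value inequality $\E(w_e\cdot \mathds{1}_{\{ALG \text{ takes } e\}}) \geq z\cdot\E(w_e\cdot \mathds{1}_{\{OPT \text{ takes } e\}})$ exactly as in \Cref{thm:width1_nolabel}: conditioning on the realizations at $i$ decouples $E^e_i$ from the tentative selection, and integrating the tentative probability against $w_e$ reproduces $\E(w_e\cdot \mathds{1}_{\{OPT \text{ takes } e\}})$. For focal-path edges $(i,i+1)$ the same computation yields a multiplicative factor of $\PP(\text{reach } i)\geq 1-z\geq z$ (valid for all $d\geq 0$), since the event that ALG reaches $i$ and the tentative draw equals $(i,i+1)$ already guarantees that ALG takes $(i,i+1)$. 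Summing over all edges gives $\E(ALG)\geq z\cdot\E(OPT)=\tfrac{1}{d+2}\E(OPT)$. The main technical obstacle I anticipate is the bookkeeping when $\alpha(i,e)$ depends on the identity of the tentative edge $e$, but the measurability of $E^e_i$ with respect to the past (together with the independence of the realizations at $i$) keeps all cross-terms clean.
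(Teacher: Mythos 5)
Your proposal is correct and takes essentially the same approach as the paper: your $\PP(E^e_i)$ is the paper's $p(e)$, your acceptance probability $z/\PP(E^e_i)$ is the paper's $\frac{1}{d+2}\cdot\frac{1}{p(e)}$, and your union bound (one term from the $s,t$-cut of skipping edges plus one Markov bound per label, giving $p(e)\geq \frac{1}{d+2}$) is the paper's argument verbatim.
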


While we leave the question open the question of the tightness of this theorem, 
for $d=1$, we present an upper bound of $4/9$ on the competitive ratio at the end of this section (Example \ref{ex:49k1d1}). This separates our setting from the case where we only care about the capacities and we don't require the solution to be a path, for which there is a $1/2$-competitive algorithm for $d=1$.

For each edge $e$ in $E$, let $x_e$ be the probability that $OPT$ takes edge $e$. Let $P$ denote an unlabeled path that visits all nodes. We enumerate nodes in the order they are visited by $P$.

\paragraph{Algorithm.} We describe an algorithm $ALG$ that takes every edge $e$ outside $P$ with probability exactly $x_e/(d+2)$, and every edge $e$ in $P$ with probability at least $x_e/(d+2)$. To do so, at any node $i$, the algorithm observes the values of all edges that leave $i$. Then, it samples $OPT$ conditional on the given values of the edges that leave $i$. Call this $OPT(i)$. If $OPT(i)$ uses one of the edges that leave $i$, $ALG$ tentatively selects that edge. Otherwise, $ALG$ takes the edge in $P$ that leaves $i$. Let $e$ be the tentatively selected edge, and denote by $p(e)$ the probability that it is feasible to take $e$, i.e., the probability that $ALG$ visits node $i$, and at that point it has available capacity for each of the labels in $L_e$. Given that $e\in OPT(i)$ and that it is feasible to take it, $ALG$ takes $e$ with probability $\frac{1}{d+2}\cdot \frac{1}{p(e)}$, independently of everything else. Otherwise, $ALG$ takes the next edge in $P$. 
\begin{lemma}
    \label{lem:lemcolors}
    $ALG$ is well-defined and for every $e\in E\setminus P$, the probability that $e$ is marked as tentative and is selected by $ALG$ is exactly $x_e/(d+2)$.
    For every $e\in P$ this probability is at least $x_e/(d+2)$.
\end{lemma}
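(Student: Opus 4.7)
The plan is to proceed by induction on the node index $i$ (in the order $P$ visits the nodes), with inductive hypothesis: for every edge $e'$ leaving a node $j<i$, the probability that $ALG$ marks $e'$ as tentative and takes it is at most $x_{e'}/(d+2)$, with equality when $e'\notin P$. Since an edge outside $P$ can only be taken when tentatively selected, this also bounds $\PP(ALG\text{ takes }e')\leq x_{e'}/(d+2)$ for all earlier edges.

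For the inductive step at node $i$, the first job is to verify that $p(e)\geq 1/(d+2)$ for every edge $e$ leaving $i$ (which ensures $\tfrac{1}{(d+2)p(e)}\leq 1$ so $ALG$ is well-defined). By a union bound, $p(e)$ is at least $1$ minus the probability that $ALG$ fails to visit $i$, minus the sum over $\ell\in L_e$ of the probability that fewer than $c_\ell$ slots remain for label $\ell$. For the first term, the inductive hypothesis gives $\PP(ALG\text{ misses }i)\leq \sum_{e'\in E_{-i}} x_{e'}/(d+2)\leq 1/(d+2)$, using that $OPT$ takes at most one edge in $E_{-i}$. For each label $\ell\in L_e$, Markov's inequality together with the inductive hypothesis bounds the probability of exhausting capacity by $\tfrac{1}{c_\ell}\cdot \sum_{e'\in E(\ell),\text{ starting before }i} x_{e'}/(d+2)\leq 1/(d+2)$, since $OPT$ respects the label capacities. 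Summing over at most $d$ labels yields $\PP(\text{infeasible})\leq (d+1)/(d+2)$, so $p(e)\geq 1/(d+2)$.

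To finish the induction, I would compute the probability that $e$ is marked tentative and taken. The event ``$e\in OPT(i)$'' is a function only of the values of the edges leaving $i$ and the fresh randomness used in the conditional sampling of $OPT$. The event ``$e$ is feasible'' depends on $ALG$'s trajectory through nodes $1,\dots,i-1$, which in turn only depends on the values of edges leaving earlier nodes and the algorithm's earlier coin flips. By the independence assumption (values at different starting nodes are independent), these two events are independent, and $\PP(e\in OPT(i))=x_e$ by marginalization. Hence
\[
\PP(e\text{ tentative and feasible})=p(e)\cdot x_e,
\]
and multiplying by the conditional taking probability $\tfrac{1}{(d+2)p(e)}$ yields exactly $x_e/(d+2)$. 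For $e\in P$ (which is unlabeled, so $L_e=\emptyset$) the same computation gives $x_e/(d+2)$; the lemma only claims a lower bound here because $e$ may additionally be taken non-tentatively as the default path edge.

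The main obstacle is the verification $p(e)\geq 1/(d+2)$: it must carefully combine the inductive hypothesis (to bound probabilities of earlier edges), Markov's inequality applied label by label, and the observation that the capacity constraints on $OPT$ furnish exactly the $c_\ell$-budget needed to make the Markov bound land at $1/(d+2)$ per label. The independence argument for the tentative-and-feasible computation is the other delicate point, since it relies on the node-level independence of edge values to decouple $OPT(i)$ from the algorithm's past trajectory.
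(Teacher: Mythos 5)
Your proof follows essentially the same route as the paper's: the same induction over nodes, the same union bound splitting infeasibility into the event of skipping node $i$ (bounded via the $s,t$-cut observation) and the per-label capacity-saturation events (bounded via Markov's inequality using $\sum_{e'\in E(\ell)} x_{e'}\leq c_\ell$), and the same independence argument yielding $x_e\cdot p(e)\cdot \tfrac{1}{(d+2)p(e)}=x_e/(d+2)$. One minor overstatement: your claim that $\PP(ALG \text{ takes } e')\leq x_{e'}/(d+2)$ holds for \emph{all} earlier edges is false for edges of $P$ (which are also taken by default), but this is harmless because every edge entering the events you bound --- edges skipping $i$ and labeled edges --- lies outside the unlabeled path $P$, where the bound holds with equality.
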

\begin{proof}

    We need to show that $\frac{1}{d+2}\cdot \frac{1}{p(e)}\in [0,1]$ and that $\PP(e\in ALG) = x_e/(d+2)$ for every edge outside $P$. We prove this by induction over the nodes, in the order they are visited by $P$. For the first node, this is immediate because $p(e)=1$ for every edge $e$ that starts at the first node, and $e$ is tentative with probability $x_e$. For a node $i$, assume the statement is true for every edge leaving nodes $i'<i$. Consider an edge $e$ that leaves $i$. Let $A_i$ be the event that $ALG$ skips node $i$, i.e., that it takes an edge that starts in a node $i'<i$ and ends in a node $i''>i$. Let $B_{i,\ell}$ be the event that at node $i$ (or when skipping node $i$) $ALG$ saturated the capacity constraint for label $\ell$. We know that if none of these events occur, then it is feasible to select $e$. Thus, by a union bound,
    \begin{align}
        p(e) \geq 1-\PP(A_i)-\sum_{\ell\in L_e} \PP(B_{i,\ell}).
        \label{eq:unionbound_labeled}
    \end{align}
    Notice that the edges that skip $i$ belong to an $s,t$-cut in the graph, so both $OPT$ and $ALG$ can use at most one of them. Also, all of them start in nodes $i'<i$, so by the induction hypothesis, $ALG$ visits any of those edges $e'$ with probability $x_{e'}/(d+2)$. These two facts imply that 
    $$\PP(A_i)=\sum_{e' \text{ skips } i} \frac{x_{e'}}{d+2}\leq \frac{1}{d+2}.$$ 
    Let  $X_{i,j}$ denote the number of edges of label $j$ selected by $ALG$ when visiting or skipping $i$. By the induction hypothesis, $\E(X_{i,j})\leq c_j/(d+2)$. Therefore, by Markov's inequality, $\PP(B_{i,j})=\PP(X_{i,j}\geq c_j)\leq 1/(d+2)$. Since $|L_e|\leq d$, from \cref{eq:unionbound_labeled} we conclude that $p(e)\geq 1/(d+2)$. This implies that $\frac{1}{d+2}\cdot \frac{1}{p(e)} \in [0,1]$. Finally, since whether $e$ is feasible and whether it is tentatively selected are independent events, we have that
    \begin{align*}
        \PP(e\in ALG)= x_e \cdot p(e) \cdot \frac{1}{d+2}\cdot \frac{1}{p(e)} = \frac{x_e}{d+2}.
    \end{align*}
\end{proof}

\begin{proof}[Proof of \Cref{thm:colors}.]
%

We have that
\begin{align*}
\E(ALG)&=\sum_{e\in E} \E\left( w_e\cdot \mathds{1}_{\{ ALG \text{ takes } e \} } \right )\\
&\geq 
\sum_{e\in E} \E\left( w_e\cdot \mathds{1}_{\{ ALG \text{ marks as tentative and takes } e \} } \right)\\
&=\sum_{e\in E} \E(w_e\,|\,ALG \text{ marks as tentative and takes } e)\\
&\qquad \qquad \qquad \qquad \cdot \PP(ALG \text{ marks as tentative and takes } e).
\end{align*}
Using \Cref{lem:lemcolors}, and the fact that given that $e$ is marked as tentative, $w_e$ is independent of the event that $ALG$ selects $e$, we have that
\begin{align*}
\E(ALG)
& \geq \sum_{e\in E} \E(w_e\,|\,ALG \text{ marks } e  \text{ as tentative})\cdot \frac{x_e}{d+2}\\
&= \frac{1}{d+2}\sum_{e\in E} \E(w_e \,|\, OPT \text{ takes } e) \cdot x_e\\
&= \frac{1}{d+2}\E(OPT).
\end{align*}
\end{proof}
\subsection{An Upper Bound for Width 1 Graphs with One Label}
To complement our algorithmic result, we observe the following upper bound for $d=1$:
\begin{example}[Upper bound of $4/9$ for $k=1$ and $d=1$] \label{ex:49k1d1}
    Consider a graph with a single focal path, where apart from $s$ and $t$, we have two internal nodes, $a$ and $b$, as in \Cref{fig:49k1d1}. We have a deterministic edge with value $2$ from $s$ to $t$. We also have an edge from $a$ to $t$ with value $2$ with probability $1/2$ and $0$ otherwise. We have a single label, \emph{red}, with a capacity of $1$, and two edges with this label. The first one is an edge from $s$ to $a$ with value $1$, and the second is an edge from $b$ to $t$ with value of $2/\varepsilon$ with probability $\varepsilon$, and $0$ otherwise.
    
    When the value $2/\varepsilon$ is realized,  $OPT=2/\varepsilon$. This happens with probability $\varepsilon$. When it is not, and the other random value is realized at $2$, then $OPT$ takes the first red edge, and the realized value of $2$, so $OPT=1+2$. This happens with probability $(1-\varepsilon)/2$. The rest of the time, $OPT=2$. Therefore, we have that $\E(OPT)=(2/\varepsilon)\cdot \varepsilon +3\cdot (1-\varepsilon)/2 + 2\cdot (1-\varepsilon)/2  = 9/2 - O(\varepsilon)$. For an online algorithm, notice that we are indifferent between the three feasible options, all having a value of $2$. Thus, we obtain an upper bound of $4/9$.
\end{example}

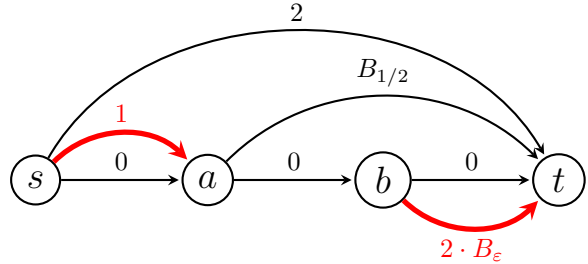
\begin{figure} [htb]
    \centering

\begin{tikzpicture}[->,>=stealth,shorten >=1pt,auto,node distance=1.6cm,
                    thick,node_style/.style={circle,draw,font=\sffamily\Large\bfseries}]

    \node[node_style] (s) {$s$};
    \node[node_style, right = of s] (i1) {$a$};
    \node[node_style, right = of i1] (i2) {$b$};
    \node[node_style, right = of i2] (t) {$t$};

    \draw[->] (s) -- (i1) node[midway, above] {0};
    \draw[->] (i1) -- (i2) node[midway, above] {0};
    \draw[->] (i2) -- (t) node[midway, above] {0};

    \draw[->, red, line width=2pt] (s) to[bend left=45] node[above] {$1$} (i1);
    \draw[->] (i1) to[bend left=45] node[above] {$B_{1/2}$} (t);
    \draw[->, red, line width=2pt] (i2) to[bend right=45] node[below] {$2\cdot B_\varepsilon$} (t);

    \draw[->] (s) to[bend left=60] node[above] {$2$} (t);
\end{tikzpicture}
    
    \caption{Instance from \Cref{ex:49k1d1} that attains the upper bound of $4/9$ for graphs with a single focal path and $d=1$. $B_x$ denotes an independent realization of a variable that has value $1/x$ with probability $x$, and $0$ otherwise. There is only one label, red, and the thick red arrows represent edges with this label. The capacity of the red label is $1$.}
    \label{fig:49k1d1}
\end{figure}

\section{General Graphs} \label{sec:general}

In this section, we consider the case of general graphs, with the following theorem as our main result.
\begin{theorem}
\label{thm:general_graphs}
    If the graph has width $k$ and each edge has at most $d$ labels, then there is an algorithm $ALG$ that satisfies $\E(ALG)\geq \frac{1}{k\cdot (d+2)}\E(OPT)$.
\end{theorem}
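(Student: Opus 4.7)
The plan is to reduce \Cref{thm:general_graphs} to the width-1 result of \Cref{thm:colors} via a path decomposition of $G$. By Dilworth's theorem, since $G$ has width $k$, I can fix $k$ $s,t$-paths $P_1,\dots,P_k$ whose node sets cover $V$, and for each $v\in V$, fix an arbitrary path $P_{\pi(v)}$ in the cover that contains $v$. The map $\pi:V\to[k]$ then partitions $V$ into classes $V_i=\pi^{-1}(i)\subseteq V(P_i)$.

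For any realization of edge weights, I would charge each edge $e=(u,v)$ used by $OPT$ to the path index $\pi(u)$, defining $OPT^{(i)}=\sum_{e=(u,v)\in OPT,\ \pi(u)=i} w_e$. Since $\sum_{i=1}^k OPT^{(i)}=OPT$ pointwise, averaging over $i$ yields an index $i^\star$ with $\E(OPT^{(i^\star)})\geq \E(OPT)/k$; since the marginals $x_e$ are determined by the input distributions, this $i^\star$ can be computed in advance.

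The algorithm then runs the procedure of \Cref{sec:one-focal-labels} on the focal path $P_{i^\star}$, triggering the tentative-selection step only at nodes $u$ with $\pi(u)=i^\star$ and simply continuing along $P_{i^\star}$ at all other nodes of $V(P_{i^\star})$. If an accepted tentative edge leaves $V(P_{i^\star})$, the algorithm completes its trajectory to $t$ by any sequence of edges; the non-negativity of weights ensures this only increases $\E(ALG)$. The argument of \Cref{lem:lemcolors}---that every tentative edge $e$ with $\pi(\text{start}(e))=i^\star$ is selected with probability exactly $x_e/(d+2)$---then gives
\[
\E(ALG)\;\geq\;\frac{1}{d+2}\,\E(OPT^{(i^\star)})\;\geq\;\frac{1}{k(d+2)}\,\E(OPT).
\]

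The main obstacle is the visit-probability bound underpinning the choice of $\alpha(u)$ in \Cref{lem:lemcolors}. In the width-1 case, the edges skipping a node $u$ on the focal path form an $s,t$-cut and $OPT$ (a single path) uses at most one of them, yielding $\sum_{e\text{ skipping }u}x_e\leq 1$. For general graphs, skipping $u$ on $P_{i^\star}$ encompasses both within-$V(P_{i^\star})$ bypasses of $u$ and edges that carry the algorithm off $V(P_{i^\star})$ before $u$ (after which it will not return); since $OPT$ may enter and exit $V(P_{i^\star})$ multiple times, the single-cut argument does not apply verbatim and must be combined with the width-$k$ structure of $G$ to preserve the bound $\sum_{e\text{ skipping }u}x_e\leq 1$. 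Verifying this carefully, by choosing a topological order consistent with $P_{i^\star}$ and exploiting that $OPT$ is a single $s,t$-path in the DAG, is the key technical point to close.
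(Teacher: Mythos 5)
You have the right high-level plan (decompose $G$ into $k$ covering paths and reduce to the width-1 result), and you correctly diagnose the weak point in your own argument, but that weak point is a genuine gap and your outline does not close it. The issue you flag---that when you run the width-1 algorithm directly on the focal path $P_{i^\star}$ inside the original graph, the edges ``skipping'' a node $u\in V(P_{i^\star})$ need not form a cut that $OPT$ crosses at most once, since $OPT$ can enter and leave $V(P_{i^\star})$ repeatedly---is exactly what breaks the inductive bound $\sum_{e\text{ skipping }u}x_e\le 1$ that underlies the choice of $\alpha$ in the width-1 proof. Your proposal also leaves unspecified what the algorithm does after an accepted tentative edge leaves $V(P_{i^\star})$ (``completes its trajectory to $t$ by any sequence of edges''); if it does not return to $V(P_{i^\star})$ it cannot collect the later contributions to $OPT^{(i^\star)}$, and if it does return, the analysis of which nodes it visits with what probability is precisely the unproved step. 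In short, this is not a routine verification to close: your marginals $x_e$ live in $G$, where $OPT$ is a single $s,t$-path in $G$ but need not induce a single path through $V(P_{i^\star})$, so the cut argument does not transfer.

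The paper sidesteps this entirely by building, for each $P_i$, an auxiliary width-1 instance $G_i$: keep all edges between nodes of $V(P_i)$, and replace each edge $e=(u,v)$ with $u\in V(P_i)$ and $v\notin V(P_i)$ by an artificial edge $e'=(u,v')$ of the same weight, where $v'$ is the earliest node of $V(P_i)$ reachable from $v$. Because $P_i$ visits $V(P_i)$ in topological order, the restriction of $OPT$ to edges starting in $V(P_i)$ maps to a feasible $s,t$-path in $G_i$, which yields $\E(OPT)\le\sum_{i=1}^k\E(OPT(G_i))$; and each $G_i$ is a genuine width-1 instance where \Cref{thm:colors} applies with its own (well-defined) marginals, so the cut argument holds verbatim inside $G_i$. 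Choosing $i$ uniformly at random (or, as in your version, the index maximizing the expectation) and running the width-1 algorithm on $G_i$, interpreting an artificial edge as ``take the real edge and then any path back to $v'$,'' gives the claimed $\tfrac{1}{k(d+2)}$. Your partition-by-$\pi$ charging is a reasonable alternative bookkeeping for the inequality $\E(OPT)\le k\cdot\max_i\E(OPT^{(i)})$, but the algorithmic half of your argument needs to be run on the contracted instance $G_{i^\star}$ rather than on $G$ directly.
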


We now briefly discuss the computation of a path cover, and then we prove the main theorem. Next, we show that if the graph can be covered with $k$ node-disjoint paths, the competitive ratio of our algorithm is $\frac{1}{k+1}$. Finally,
we present a lower bound of $1/(k+1)$ and show that depending on the specific covering we choose, our analysis is tight for $d=0$.

\paragraph{Computation of the Path Covering} A classic result of Dilworth~\cite{D50} and Fulkerson~\cite{F56} states that we can compute a minimum path cover of a DAG in polynomial time. A simple way to do so is to find the transitive closure of the graph, then construct a bipartite graph where for each node $v\in V$, on each side, we have copies $v_\ell$ and $v_r$. For each edge $(u,v)$ in the original graph, we add an edge $(u_\ell,v_r)$. A minimum path cover corresponds to a maximum matching in this new graph. Using the Hopcroft-Karp algorithm~\cite{HK73} this gives a running time of $O(|V|^{2.5})$. If we allow for dependency on $k$, a recent result states that it can be done in time $O(k^2|V|+|E|)$~\cite{CCMRT22}.

\subsection{Proof of \Cref{thm:general_graphs} and Tightness}
\begin{proof}
    If the graph has width $k$, then there are paths $P_1,\dots, P_k$ that cover all nodes. We construct graphs $G_1,\dots, G_k$ as follows. In graph $G_i$, we take the graph that has $V(P_i)$ as the set of nodes and has all edges between nodes in $V(P_i)$. Then, for every edge $e$ that starts at a node $u\in V(P_i)$ and ends in a node $v\not\in V(P_i)$, let $v'$ be the earliest node in $V(P_i)$ (in the ordering induced by $P_i$) such that there is a path from $v$ to $v'$. Note that there is always a path from $v$ to $t$, and $t\in V(P_i)$, so there is always a node $v'$. We add an edge $e'=(u,v')$ with value $w_{e'}:= w_e$. We call these edges \emph{artificial}. Note that with this construction, every edge that starts in a node in $V(P_i)$ is represented. See \Cref{fig:pathwdetours_thm5} for an illustration of the construction. Let $E^+(P_i)$ be the set of edges that start in a node in $V(P_i)$. There is always a path in $G_i$ with a value at least as large as the value of the edges in $OPT\cap E^+(P_i)$. Let $OPT(G_i)$ denote the optimal path in the graph $G_i$. Since all edges in the original graph are represented by some edge in one of the new graphs, we have that
    \begin{align*}
        \E(OPT)\leq \sum_{i=1}^k \E(OPT(G_i)).
    \end{align*}

    Consider the algorithm that chooses uniformly at random one of the instances $G_1,\dots, G_k$. Let this be $G_i$. Then, it runs the algorithm given by \Cref{thm:colors} on $G_i$. That is, whenever the algorithm takes an edge that starts and ends in $V(P_i)$, it takes the same edge in the original graph; and whenever it takes an artificial edge, it takes the corresponding path in $G$. By the definition of $G_i$, and because choosing $G_i$ is independent of the values, we have that conditional on choosing $G_i$, $ALG$ gets as much value as the algorithm for a single path on $G_i$, which by \Cref{thm:colors} is at least $\frac{1}{d+2}\cdot \E(OPT(G_i))$. Therefore, we have that 
    \begin{align*}
        \E(ALG) &\geq \frac{1}{k}\cdot \sum_{i=1}^k \frac{1}{d+2} \cdot \E(OPT(G_i))\\
        &\geq \frac{1}{k\cdot (d+2)} \cdot \E(OPT).
    \end{align*}
    This concludes the proof of the theorem.
\end{proof}

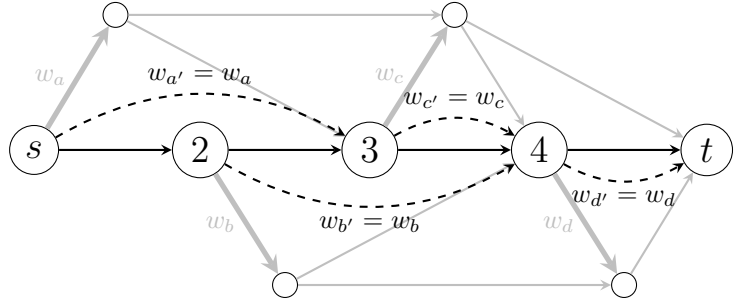
\begin{figure}
    \centering


\begin{tikzpicture}[node distance=1.5cm and 1.5cm, 
                    node_style/.style={circle,draw,font=\sffamily\Large\bfseries},
                    main_path/.style={->,>=stealth, thick},
                    detour_path/.style={->,>=stealth,gray!50, thick},
                    detour_path_thicker/.style={->,>=stealth,gray!50, line width=2pt},
                    new_path/.style={->,>=stealth,dashed, thick}]

    \node[node_style] (i1) {$s$};
    \node[node_style, right = of i1] (i2) {$2$};
    \node[node_style, right = of i2] (i3) {$3$};
    \node[node_style, right = of i3] (i4) {$4$};
    \node[node_style, right = of i4] (t) {$t$};

    \path (i1) -- (i2) node[midway] (posd1) {};
    \path (i3) -- (i4) node[midway] (posd2) {};
    \path (i2) -- (i3) node[midway] (posd3) {};
    \path (i4) -- (t) node[midway] (posd4) {};
    
    \node[node_style, above = of posd1] (d1) {};
    \node[node_style, above = of posd2] (d2) {};
    \node[node_style, below = of posd3] (d3) {};
    \node[node_style, below = of posd4] (d4) {};

    \draw[main_path] (i1) -- (i2);
    \draw[main_path] (i2) -- (i3);
    \draw[main_path] (i3) -- (i4);
    \draw[main_path] (i4) -- (t);
    
    \draw[detour_path_thicker] (i1) -- (d1) node[midway, left] {$w_a$};
    \draw[detour_path] (d1) -- (i3);
    \draw[detour_path_thicker] (i2) -- (d3) node[midway, left] {$w_b$};
    \draw[detour_path] (d3) -- (i4);
    \draw[detour_path] (d1) -- (d2);
    \draw[detour_path] (d2) -- (t);
    \draw[detour_path] (d2) -- (i4);
    \draw[detour_path] (d3) -- (d4);
    \draw[detour_path] (d4) -- (t);
    \draw[detour_path_thicker] (i3) -- (d2) node[midway, left] {$w_c$};
    \draw[detour_path_thicker] (i4) -- (d4) node[midway, left] {$w_d$};

    \draw[new_path] (i1) to[bend left]  node[midway, above] {$w_{a'}=w_a$}(i3);
    \draw[new_path] (i2) to[bend right] node[midway, below] {$w_{b'}=w_b$}(i4);
    \draw[new_path] (i3) to[bend left] node[midway, above] {$w_{c'}=w_c$}(i4);
    \draw[new_path] (i4) to[bend right] node[midway, below] {$w_{d'}=w_d$}(t);


\end{tikzpicture}

    \caption{An illustration of the construction of a graph $G_i$ in the proof of \Cref{thm:general_graphs}. We start with a path $P_i=(s,2,3,4,t)$ and replace edges that leave the path (thick grey arrows, $a,b,c$, and $d$) with an artificial edge (dashed arrows, $a',b',c'$, and $d'$) that ends in the earliest node in $P_i$ that can be reached after taking the original edge. The values of the artificial edges are the same as in the original edges, and we ignore the values of the intermediate edges that do not start from nodes in $P_i$.}
    \label{fig:pathwdetours_thm5}
\end{figure}

To complement \Cref{thm:general_graphs}, we show that depending on which path covering we choose, our analysis can be tight: 


\begin{example}[Tightness of the analysis in \Cref{thm:general_graphs}] \label{ex:grid}
    Consider a graph with the form of a $k\times k$ grid, where the top-left corner is the source $s$ and the bottom-right corner is the sink $t$ as in \Cref{fig:lowerbound_grid}. All horizontal edges are oriented to the right, and all vertical edges are oriented down. Each edge in the second vertical column has a deterministic value of $1$. Add an edge from the last node in the second vertical column to $t$ with a value of $1$. Next, for a given $\varepsilon>0$, each edge in the third vertical column has an independent random value of $1/\varepsilon$ with probability $\varepsilon$ and $0$ otherwise. Add an edge from the last node of the third column to $t$ with a value drawn from the same distribution. All other edges (vertical and horizontal) have a deterministic value of $0$. Notice that $OPT$ can take all the realized  $1/\varepsilon$'s. When none is realized, $OPT$ can take all the deterministic $1$'s. Therefore, $\E(OPT)\geq k\cdot \varepsilon/\varepsilon+ k\cdot (1-k\varepsilon)= 2k-O_k(\varepsilon)$. If we cover the graph with the horizontal paths, in each path we will face the same situation: take the edge with value $1$ without observing the riskier option, or take the riskier option that has expected value of $1$. Therefore, in any horizontal path, an online algorithm can get a value of $1$ at most and as a result the competitive ratio of $ALG$ for this cover tends to $\frac{1}{2k}$ as $\varepsilon$ tends to zero. On the other hand, if we cover the graph with the $k$ vertical paths (completed with horizontal edges so that they start at $s$ and end at $t$), an algorithm that chooses a random path will get a value of $2$, and an algorithm that chooses the best path will even get a value of $k$.  
\end{example}

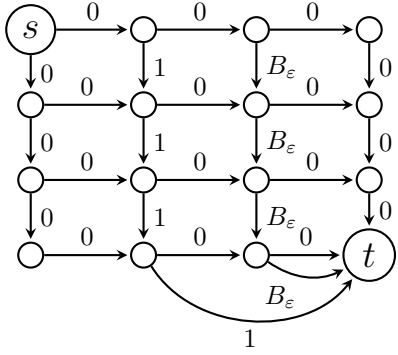
\begin{figure} [htb]
\centering
    
\begin{tikzpicture}[->,>=stealth,shorten >=1pt,auto,node distance=1.8cm,
                    thick,node_style/.style={circle,draw,font=\sffamily\Large\bfseries}]

    \foreach \x in {0,1,2,3} {
        \foreach \y in {0,1,2,3} {
            \ifnum\x=0
                \ifnum\y=0
                    \node[node_style] (\x-\y) at (\x*1.5,-\y) {$s$};
                \else
                    \node[node_style] (\x-\y) at (\x*1.5,-\y) {};
                \fi
            \else
                \ifnum\x=3
                    \ifnum\y=3
                        \node[node_style] (\x-\y) at (\x*1.5,-\y) {$t$};
                    \else
                        \node[node_style] (\x-\y) at (\x*1.5,-\y) {};
                    \fi
                \else
                    \node[node_style] (\x-\y) at (\x*1.5,-\y) {};
                \fi
            \fi
        }
    }
    
    \foreach \x in {0,3} {
        \foreach \y [count=\yi] in {0,1,2} {
            \draw (\x-\y) -- (\x-\yi) node[midway, right] {$0$};
        }
    }

    \foreach \x in {1} {
        \foreach \y [count=\yi] in {0,1,2} {
            \draw (\x-\y) -- (\x-\yi) node[midway, right] {$1$};
        }
    }

    \foreach \x in {2} {
        \foreach \y [count=\yi] in {0,1,2} {
            \draw (\x-\y) -- (\x-\yi) node[midway, right] {$B_\varepsilon$};
        }
    }
    
    \foreach \x [count=\xi] in {0,1,2} {
        \foreach \y in {0,1,2,3} {
            \draw (\x-\y) -- (\xi-\y) node[midway, above] {$0$};
        }
    }

    \draw (1-3) to[bend right=55] node[midway, below] {$1$} (3-3);
    \draw (2-3) to[bend right] node[midway, below left] {$B_\varepsilon$} (3-3);
    
\end{tikzpicture}
    \caption{Illustration for the instance in \Cref{ex:grid}. $B_\varepsilon$ denotes a realization of a variable that has value $1/\varepsilon$ with probability $\varepsilon$, and $0$ otherwise.}
    \label{fig:lowerbound_grid}
\end{figure}

\subsection{Disjoint Paths}
\label{sec:disjoint}

For the special case that the graph can be covered with $k$ node-disjoint $s,t$-paths (disjoint except for $s$ and $t$), we have the following result.

\begin{theorem}
    \label{thm:disjoint_paths}
    If the graph can be covered with $k$ node-disjoint paths, and edges are unlabeled, then there is an algorithm $ALG$ that satisfies $\E(ALG)\geq \frac{1}{k+1} \E(OPT)$.
\end{theorem}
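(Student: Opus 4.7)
The plan is to generalize the single focal path algorithm of \Cref{sec:one-focal} so that it operates simultaneously over all $k$ node-disjoint paths $P_1,\dots,P_k$. Running the single focal path algorithm on a uniformly chosen $P_i$ would only guarantee $\frac{1}{2k}\E(OPT)$; to reach the stronger bound $\frac{1}{k+1}\E(OPT)$, I will use a tentative probability of $\frac{1}{k+1}$ and design the default behavior at $s$ so that the remaining $\frac{k}{k+1}$ of the probability mass is split uniformly across the $k$ outgoing path edges. Let $x_e=\PP(OPT\text{ takes }e)$ and $p(v)=\sum_{e\text{ incoming to }v}x_e=\PP(OPT\text{ visits }v)$, and for each $i$ let $a_i$ denote the successor of $s$ on $P_i$.

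\textbf{Algorithm.} At $s$, observe the realizations of the outgoing edges and sample a tentative edge as in \Cref{sec:one-focal}. With probability $\frac{1}{k+1}$ take the tentative; with the remaining probability pick $i\in[k]$ uniformly at random and take $(s,a_i)$. At any other non-sink node $v$, say $v\in V(P_i)$, sample a tentative conditional on the realizations of the outgoing edges; if the tentative is the path edge $(v,\mathrm{next}_i(v))$ or there is no tentative, take the path edge; otherwise, with probability $\alpha(v):=\frac{1}{1+p(v)}\in[\tfrac12,1]$ take the tentative and with the remaining probability take the path edge.

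\textbf{Analysis.} The core step is to establish by induction on a topological ordering the closed form
\[
    q(v):=\PP(ALG\text{ visits }v)=\frac{1+p(v)}{k+1}\qquad \text{for every } v\in V\setminus\{s\}.
\]
At the base case $v=a_i$ the probability of reaching $a_i$ along $(s,a_i)$ is $\frac{x_{(s,a_i)}}{k+1}+\frac{1}{k+1}$ (tentative contribution plus uniform default contribution), and each non-path incoming edge $(u,a_i)$ contributes $\frac{x_{(u,a_i)}}{k+1}$ because $q(u)\alpha(u)=\frac{1}{k+1}$ for any ancestor $u$ by the inductive hypothesis; summing gives $\frac{1+p(a_i)}{k+1}$. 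For a non-initial node $v$ on $P_i$ with predecessor $u$, the path edge contributes
\[
    \pi((u,v))=q(u)\bigl[1-\alpha(u)(p(u)-x_{(u,v)})\bigr]=q(u)-\frac{p(u)-x_{(u,v)}}{k+1},
\]
while the non-path incoming edges contribute $\frac{p(v)-x_{(u,v)}}{k+1}$, yielding the recursion
\[
    q(v)=q(u)+\frac{p(v)-p(u)}{k+1}.
\]
Telescoping along $P_i$ from $a_i$ gives the claimed closed form; in particular $q(v)\geq\frac{1}{k+1}$, so $\alpha(v)\in[0,1]$ and the algorithm is well-defined.

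\textbf{Conclusion and main obstacle.} Once the invariant is in place, every edge $e$ is tentatively selected and taken with probability at least $\frac{x_e}{k+1}$ (equality for non-path edges). The bound $\E(ALG)\geq\frac{1}{k+1}\E(OPT)$ then follows by the same chain of equalities as in the proof of \Cref{thm:width1_nolabel}, exploiting that $\E[w_e\mid ALG\text{ tentatively takes }e]=\E[w_e\mid OPT\text{ takes }e]$. I expect the main obstacle to be guessing the right default rule at $s$: the uniform default mass of exactly $\frac{1}{k+1}$ on each $(s,a_i)$ is what makes the base case $q(a_i)=\frac{1+p(a_i)}{k+1}$ match the general closed form, after which the inductive step is just a clean cancellation $(p(v)-x_{(u,v)})-(p(u)-x_{(u,v)})=p(v)-p(u)$.
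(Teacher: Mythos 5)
Your proposal is correct, but it proves \Cref{thm:disjoint_paths} by a genuinely different route than the paper. The paper's proof is modular: it partitions the edge set into $k$ width-1 sub-instances $G_1,\dots,G_k$ (one per path $P_i$), introduces surrogate benchmarks $OPT_i$ that coincide with $OPT$ when $OPT\subseteq E(G_i)$ and default to $P_i$ otherwise, proves a refinement of \Cref{thm:width1_nolabel} (\Cref{lem:modified_width1}) giving a $\frac{1}{2-q}$ guarantee when the benchmark takes the focal path with probability $q$, and then runs the best of the $k$ resulting single-path algorithms; the factor $k+1$ emerges from $\sum_{i}(1+f_i)=k+1$ with $\sum_i f_i=1$. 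You instead construct a single global algorithm on the whole DAG and establish a pointwise OCRS-type invariant, $\PP(ALG \text{ visits } v)=\frac{1+p(v)}{k+1}$ with acceptance rate $\alpha(v)=\frac{1}{1+p(v)}$, by seeding each path with $\frac{1}{k+1}$ of default mass at $s$; I checked the base case at $a_i$, the recursion $q(v)=q(u)+\frac{p(v)-p(u)}{k+1}$ (which correctly uses flow conservation $\sum_{e \text{ out of } u}x_e=p(u)$ and node-disjointness so that each internal node has a unique default edge), and the final accounting, and they are all sound. Your approach buys a stronger conclusion --- every edge is tentatively selected and taken with probability at least $\frac{x_e}{k+1}$, exactly $\frac{x_e}{k+1}$ off the paths, which is a per-edge guarantee rather than only a bound in expectation --- and it degenerates exactly to the algorithm of \Cref{sec:one-focal} when $k=1$ (there $\alpha(i)=\frac{1/2}{1-\sum_{e\in E_{-i}}x_e/2}=\frac{1}{1+p(i)}$); the paper's approach buys black-box reuse of the width-1 machinery at the cost of having to select the best of $k$ algorithms. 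One small inaccuracy: your invariant cannot hold at $v=t$ (where $q(t)=1\neq \frac{2}{k+1}$ for $k>1$, since $t$ receives default mass from all $k$ paths), so it should be stated for $v\in V\setminus\{s,t\}$; this is harmless because $\alpha(t)$ is never used and the selection probabilities of edges into $t$ only rely on the invariant at their tails.
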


For this, we need the following lemma, which is a modified version of \Cref{thm:width1_nolabel}, in which the benchmark takes the focal path with constant probability. We remark that this cannot be used to improve the bound of $1/2$, as in the worst-case instance, the probability that $OPT$ selects the focal path tends to $0$.

\begin{lemma}
    \label{lem:modified_width1}
    Consider a graph with a single focal path $P$, and an offline strategy $OFF$. Let $q$ be the probability that $OFF=P$. Then there is an algorithm $ALG$ that satisfies $\E(ALG)\geq \frac{1}{2-q}\E(OFF)$.
\end{lemma}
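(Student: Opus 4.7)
The plan is to re-run the argument of Theorem~\ref{thm:width1_nolabel} verbatim, but with the benchmark $OPT$ replaced by $OFF$ and the universal scaling factor $1/2$ replaced by $1/(2-q)$. Concretely, set $x_e := \PP(OFF \text{ takes } e)$ for every $e \in E$, enumerate the nodes in the order $P$ visits them, and run the following algorithm. At node $i$, observe the realizations of all edges leaving $i$, sample a copy of $OFF$ conditional on those realizations to produce a tentative edge, and take the tentative edge with probability
\[
\alpha(i) \;=\; \frac{1/(2-q)}{1 - \sum_{e \in E_{-i}} x_e/(2-q)},
\]
otherwise take the focal edge $(i, i+1)$.

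The crux of the argument — and the only place where the assumption on $q$ enters — is verifying $\alpha(i) \in [0,1]$. For this, I would observe that the edges in $E_{-i}$ all belong to the $s,t$-cut separating $\{1,\dots,i-1\}$ from $\{i,\dots,n\}$, and that whenever $OFF = P$, none of them are selected. Hence $\sum_{e\in E_{-i}} x_e \le 1 - q$, which rearranges exactly to $\alpha(i) \le 1$ and implies that the algorithm is well-defined.

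Once $\alpha(i)$ is in $[0,1]$, the inductive computation from the proof of Theorem~\ref{thm:width1_nolabel} goes through unchanged: inductively, $\PP(ALG \text{ visits } i) = 1 - \sum_{e\in E_{-i}} x_e/(2-q)$, so for an edge $e$ leaving $i$ outside $P$, the probability that $ALG$ marks $e$ as tentative and takes it equals
\[
\Bigl(1 - \sum_{e'\in E_{-i}} x_{e'}/(2-q)\Bigr) \cdot x_e \cdot \alpha(i) \;=\; \frac{x_e}{2-q},
\]
and at least $x_e/(2-q)$ for $e$ on $P$. Since, conditional on $e$ being marked tentative, $w_e$ is distributed as $w_e$ conditional on $OFF$ taking $e$, a final chain of inequalities identical to the one in the proof of Theorem~\ref{thm:width1_nolabel} yields
\[
\E(ALG) \;\ge\; \frac{1}{2-q} \sum_{e\in E} \E(w_e \mid OFF \text{ takes } e)\cdot x_e \;=\; \frac{1}{2-q}\,\E(OFF).
\]

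I do not expect any genuine obstacle: the modification is essentially bookkeeping. The only subtlety is making sure the $1-q$ slack really is available at every cut simultaneously, which follows from the single observation that $\{OFF = P\}$ is one event and rules out every bypass edge on every cut at once.
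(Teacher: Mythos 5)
Your proposal is correct and matches the paper's own proof essentially verbatim: the same modified acceptance probability $\alpha(i)$, the same key observation that $\sum_{e\in E_{-i}} x_e \le 1-q$ because the event $\{OFF = P\}$ excludes all bypass edges, and the same induction carried over from Theorem~\ref{thm:width1_nolabel}. No gaps.
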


\begin{proof}
    Let $x_e$ be the probability that $OFF$ selects edge $e$, and enumerate the nodes in $P$ as $s=1,2,\dots,n=t$. At node $i$, the algorithm observes the realizations of edges that start at $i$ and tentatively chooses one of them with the same probability as $OFF$, conditional on the values of the edges that start at $i$. With probability
    \begin{align*}
        \alpha(i) = \frac{1/(2-q)}{1-\sum_{e\in E_{-i}} \frac{x_e}{2-q}},
    \end{align*}
    where $E_{-i}$ denotes the edges that skip $i$, the algorithm takes the tentative edge. With probability $1-\alpha(i)$, the algorithm stays in the path and goes to the next node $i+1$. If there is no tentative edge, the algorithm continues to the next node $i+1$. Since $OFF$ takes $P$ with probability $q$, $\sum_{e\in E_{-i}} x_e \leq 1-q$, so $\alpha(i)\in [0,1]$.

    Similar to the proof of \Cref{thm:width1_nolabel}, we can show inductively that $ALG$ takes each edge outside $P$ with probability exactly $x_e/(2-q)$, and every edge in $P$ with probability at least $x_e/(2-q)$. Assume this is true for edges that start in nodes $1,\dots,i-1$. For an edge $e$ that starts at node $i$, $ALG$ selects it with probability equal to (if it is the edge in $P$, with probability at least),
    \begin{align*}
        \left(
        1-\sum_{e'\in E_{-i}} \frac{x_{e'}}{2-1} \right) \cdot
        x_e \cdot \alpha(i) = \frac{x_e}{2-q}.
    \end{align*}
    As in the proof of \Cref{thm:width1_nolabel}, this, together with the definition of tentative edge, implies that $\E(ALG)\geq \frac{1}{2-q}\E(OFF)$.
\end{proof}

\begin{proof}[Proof of \Cref{thm:disjoint_paths}.]
Let $P_1,\dots P_k$ be the $k$ disjoint paths. Create $k$ graphs $G_1,\dots,G_k$ by taking in $G_i$ the nodes $V(P_i)$ and all edges that start at a node in $V(P_i)$, except for the edges that connect $s$ and $t$ directly. Assign to $G_1$ the edges that connect $s$ and $t$ directly. The sets $E(G_1),\dots,E(G_k)$ form a partition of $G$. Also, note that for every $s,t$-path $P$, and all $i=1,\dots,k$, either $P\subseteq E(G_i)$ or $P\cap E(G_i)=\emptyset$. Therefore, we can define the probability $f_i=\PP(OPT\subseteq E(G_i))$, and we have that $\sum_{i=1}^k f_i =1$. Now, let $OPT_i$ be an offline strategy in $G_i$ such that $OPT_i=OPT$ if $OPT\subseteq E(G_i)$, and $OPT=P_i$ otherwise. We have that
\begin{align*}
    \E(OPT)&= \sum_{i=1}^k \E(OPT\cdot \mathds{1}_{\{OPT\subseteq E(G_i)\}})\\
    &\leq \sum_{i=1}^k \E(OPT_i).
\end{align*}

If we apply \Cref{lem:modified_width1} to $G_i$, with $OFF=OPT_i$, we obtain an algorithm $ALG_i$ such that $\E(ALG_i)\geq \frac{1}{1+f_i}\cdot \E(OPT_i)$. So, if we take $ALG$ to be $ALG_i$ for $i$ that maximizes the expected value, we obtain that
\begin{align*}
    \E(ALG)&\geq \max_{1\leq i\leq k} \frac{\E(OPT_i)}{1+f_i}.
\end{align*}
Therefore,
\begin{align*}
    \E(OPT)\leq \sum_{i=1}^k \E(OPT_i) \leq \sum_{i=1}^k (1+f_i)\cdot \max_{1\leq j\leq k} \frac{\E(OPT_j)}{1+f_j}\leq \E(ALG)\cdot \sum_{i=1}^k (1+f_i) = (k+1)\cdot \E(ALG).
\end{align*}
This concludes the proof of the theorem.  
\end{proof}

\subsubsection{Upper bound}
\label{sec:upperbounds}

We present an upper bound of $1/(k+1)$ on an instance with $k$ disjoint paths, which shows the tightness of \Cref{thm:disjoint_paths}. 

\begin{example}[Upper bound of $1/(k+1)$ for disjoint paths] \label{ex:kplus1} Consider the instance depicted in \Cref{fig:lowerbound_kplus1}, where apart from $s$ and $t$, we have $k$ nodes numbered $1,\dots, k$. For each node $i=1,\dots, k$, there is an edge from $s$ to $i$ with a value of $0$, and an edge from $i$ to $t$ with a value of $1/\varepsilon$ with probability $\varepsilon$ and $0$ otherwise. All values are independent. We also have one extra edge that connects $s$ and $t$ and has a value of $1$. In this instance, $OPT$ takes $1/\varepsilon$ if any is realized, and the $1$ if none is realized. Thus, $\E(OPT)=(1/\varepsilon)\cdot (1-(1-\varepsilon)^k) + 1\cdot (1-\varepsilon)^k=k+1-O_k(\varepsilon)$, while any online algorithm can get at most a value of $1$.
    
\end{example}

\begin{figure}[htb]
    \centering

\begin{tikzpicture}[->,>=stealth,shorten >=1pt,auto,node distance=1.6cm,
                    thick,node_style/.style={circle,draw,font=\sffamily\Large\bfseries}]

    \node[node_style] (s) {$s$};
    \node[node_style, right = of s] (i1) {$1$};
    \node[node_style, below of=i1, yshift=10] (i3) {$2$};
    \node[node_style, below of=i3] (i4) {$k$};
    \node[node_style, right = of i1] (t) {$t$};
    \node[below of=i3, yshift=25pt] (p1) {$\vdots$};

    \draw[->] (s) -- (i1) node[midway, above] {0};
    \draw[->] (s) to[bend right=10] node[pos=0.55, above=2pt] {0} (i3);
    \draw[->] (s) to[bend right=20]  node[midway, right=3pt] {0} (i4);
    \draw[->] (i1) -- (t) node[midway, above] {$B_\varepsilon$};
    \draw[->] (i3) to[bend right=10] node[pos=0.45, above=2pt] {$B_\varepsilon$} (t);
    \draw[->] (i4) to[bend right=20] node[midway, left=1pt] {$B_\varepsilon$} (t);
    
    \draw[->] (s) to[bend left=45] node[above] {$1$} (t);
\end{tikzpicture}
    
    \caption{Instance of \Cref{ex:kplus1} that attains the upper bound of $1/(k+1)$ for graphs of width $k$. Each $B_\varepsilon$ denotes an independent realization of a distribution that takes the value $1/\varepsilon$ with probability $\varepsilon$ and $0$ otherwise.}
    \label{fig:lowerbound_kplus1}
\end{figure}
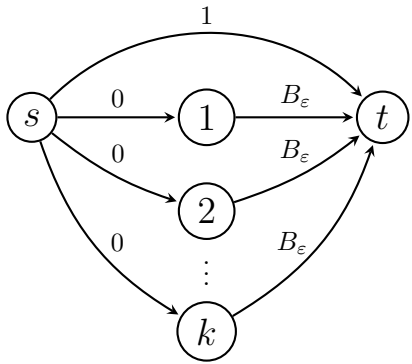

\section{Discussion}
In this paper, we present a powerful graph-theoretic framework for prophet inequalities. We obtain a $\frac{1}{k(d+2)}$-competitive algorithm for a graph that can be covered by $k$ paths and each edge has at most $d$ labels. A natural open question resulting from our work is whether there is always a path cover allowing for a $\frac{1}{k+1}$-competitive algorithm in the case $d=0$. If there is such a cover, is it possible to compute it in polynomial time? Our framework sets the stage for many exciting extensions, for example:
\begin{itemize}
\item Improving bounds by restricting the distributions: For the classic prophet inequality, substantially better algorithms are available when the values are IID \cite{CFHOV21}. Does something similar happen in our framework when the value of every edge is either $0$ or drawn from the same distribution IID? Is there an algorithm with a better competitive ratio for this case? 
\item Handling a DM that only gets a few samples from each distribution: A recent line of work considers the case in which the DM does not know the distributions that the values are sampled from but is given instead some samples from each distribution, \cite{CCES23,AKW14,CDFF22,KNR22}. It is interesting to consider a similar question in our graph theoretic framework and figure out how the competitive ratio achievable by the algorithm changes when we get some number of samples for the value of each edge.
\end{itemize}

\section*{Acknowledgments}

This material is based upon work partially supported by the National Science Foundation under Grant No. DMS-1928930, while the authors were in residence at the Mathematical Sciences Research Institute in Berkeley, California, during the fall semester of 2023. This work was supported in part by BSF grant 2018206, ISF grant 2167/19, Simons Foundation Collaboration on the Theory of Algorithmic Fairness, Simons Foundation investigators award 689988, Centro de Modelamiento Matemático (CMM) BASAL fund FB210005 for center of excellence from ANID-Chile, and Koret Program for Scholars from Israel while S. Oren was visiting Stanford.

 \bibliographystyle{acm}

 \bibliography{references}

\end{document}